\newtheorem{notation}{Notation}
\newtheorem{definition}{Definition}
\newtheorem{theorem}{Theorem}
\newenvironment{remark}{\noindent\textit{Remark:}}{}
\newcommand\set[1]{\ensuremath{\{ #1 \} }}
\renewcommand\int[1]{\ensuremath{\llbracket #1 \rrbracket}}
\newcommand\cur[1]{\ensuremath{{\cal{#1}}}}
\newcommand\commentaire[1]{}
\begin{document}

\begin{center}

{\LARGE \bf
Termination of Multipartite Graph Series\\\smallskip Arising from Complex Network Modelling}

\bigskip

Matthieu Latapy
\ \ \ \ \
Thi Ha Duong Phan
\ \ \ \ \
Christophe Crespelle\,\footnote{Corresponding author: christophe.crespelle@inria.fr}
\ \ \ \ \
Thanh Qui Nguyen

\bigskip
\bigskip

\begin{minipage}{.9\textwidth}
\centerline{\bf Abstract.}
\medskip
An intense activity is nowadays devoted to the definition of models capturing the properties of complex networks. Among the most promising approaches, it has been proposed to model these graphs via their clique incidence bipartite graphs. However, this approach has, until now, severe limitations resulting from its incapacity to reproduce a key property of this object: the overlapping nature of cliques in complex networks. In order to get rid of these limitations we propose to encode the structure of clique overlaps in a network thanks to a process consisting in iteratively \emph{factorising} the maximal bicliques between the upper level and the other levels of a multipartite graph. We show that the most natural definition of this factorising process leads to infinite series for some instances. Our main result is to design a restriction of this process that terminates for any arbitrary graph. Moreover, we show that the resulting multipartite graph has remarkable combinatorial properties and is closely related to another fundamental combinatorial object. Finally, we show that, in practice, this multipartite graph is computationally tractable and has a size that makes it suitable for complex network modelling.
\end{minipage}

\end{center}

%%%%%%%%%%%%%%%%%%%%%%%%%%%%%%%%%%%%%%%%%%%%%%%%%%%%%%%%%%%%%%%%
%%%%%%%%%%%%%%%%%%%%%%%%%%%%%%%%%%%%%%%%%%%%%%%%%%%%%%%%%%%%%%%%
\section{Introduction}\label{sec-intro}
%%%%%%%%%%%%%%%%%%%%%%%%%%%%%%%%%%%%%%%%%%%%%%%%%%%%%%%%%%%%%%%%
%%%%%%%%%%%%%%%%%%%%%%%%%%%%%%%%%%%%%%%%%%%%%%%%%%%%%%%%%%%%%%%%

% Context + motivation

It appeared recently \cite{watts98collective,albert02statistical,dorogovtsev02evolution}
% strogatz01exploring
that most real-world complex networks (like the internet topology, data exchanges, web graphs, social networks, or biological networks) have some non-trivial properties in common. In particular, they have a very low density, low average distance and diameter, an heterogeneous degree distribution, and a high local density (usually captured by the clustering coefficient \cite{watts98collective}).
Models of complex networks aim at reproducing these properties.
Random\,\footnote{In all the paper, {\em random} means uniformly chosen in a given class.} graphs with given numbers of vertices and edges \cite{erdos59random}
% ,bollobas01random
fit the density and distance properties, but they have homogeneous degree distributions and low local density. Random graphs with prescribed distributions \cite{molloy95critical}
% ,molloy98size
and the preferential attachment model \cite{barabasi99emergence} fit the same requirement, with the degree distribution in addition, but they still have a low local density. As these models are very simple, formally and computationnaly tractable, and rather intuitive, there is nowadays a wide consensus on using them.

However, when one wants to capture the high local density in addition to previous properties, there is no clear solution. In particular, we are unable to construct a random graph with prescribed degree distribution and local density. As a consequence, many proposals have been made, e.g. \cite{watts98collective,dorogovtsev02evolution,physicaa06guillaume,ipl04guillaume}
%DGK09,
, each with its own advantages and drawbacks.
Among the most promising approaches, \cite{ipl04guillaume,physicaa06guillaume} propose to model complex networks based on the properties of their clique incidence bipartite graph (see definition below).
They show that generating bipartite graphs with prescribed degree distributions for bottom and top vertices and interpreting them as clique incidence graphs results in graphs fitting all the complex network properties listed above, including heterogeneous degree distribution and high local density.

However, the bipartite model suffers from severe limitations. In particular, it does not capture overlap between cliques, which is prevalent in practice.
Indeed, as evidenced in \cite{ipl04guillaume,socialnetworks07latapy}, the neighbourhoods of vertices in the clique incidence bipartite graph of a real-world complex network generally have significant intersections: cliques strongly overlap and vertices belong to many cliques in common.
On the opposite, when one generates a random bipartite graph with prescribed degree distributions, the obtained bipartite graph have much smaller neighbourhood intersections, almost always limited to at most one vertex (under reasonable assumptions on the degree distributions).
Indeed, the process of generation based on the bipartite graph is equivalent to randomly choosing sets of vertices of the graph (with prescribed size distribution) that we all link together. Because of the constraints imposed on this size distribution by the low density of the graph , the probability of choosing several vertices in common between two such random sets tends to zero when the graph grows.
As a consequence, the bipartite model fails in capturing the overlapping nature of cliques in complex networks. This leads in particular to graphs which have many more edges than the original ones (two cliques of size $d$ lead to $d.(d-1)$ edges in the model graph, while the overlap between cliques make this number much smaller in the original graph).

%%%%%%%%%%%%%%%%%%%%%%%%%%%%%%%%%%%%%%%%%%%%%%%%%%%%%%%%%%%%%%%%
\subsubsection*{Our contribution}
%%%%%%%%%%%%%%%%%%%%%%%%%%%%%%%%%%%%%%%%%%%%%%%%%%%%%%%%%%%%%%%%

Since the random generation process of the bipartite graph is not able to generate non-trivial neighbourhood intersections (that is having cardinality at least two), a natural direction to try to solve this problem consists in using a structure explicitly encoding these intersections. This can be done using a tripartite graph instead of a bipartite one: one may encode any bipartite graph $B = (\bot,\top,E)$ into a tripartite one $T = (\bot, \top, C, E')$ where $C$ is the set of non-trivial maximal bicliques (complete bipartite graphs having at least two bottom vertices and two top vertices) of $B$ and $E'$ is obtained from $E$ by adding the edges between any biclique $c$ in $C$ and all the vertices of $B$ which belong to $c$ and removing the edges between vertices of $C$. This process, which we call \emph{factorisation}, can be iterated to encode any graph in a multipartite one where there are hopefully no non-trivial neighbourhood intersections.

In this paper, we show that this iterated factorising process do not end for some graphs. We then introduce variations of this base process and study them with regard to termination issue. Our main result is the design of such a process, which we call \emph{clean factorisation}, that terminates on any arbitrary graph. In addition, we show that the multipartite graph on which terminates this process has remarkable combinatorial properties and is strongly related to a fundamental combinatorial object. Namely, its vertices are in bijection with the chains of the inf-semilattice of intersections of maximal cliques of the graph.
Finally, we give an upper bound on the size and computation time of the graph on which terminates the iterated clean factorising process of $G$, under reasonable hypothesis on the degree distributions of the clique incidence bipartite graph of $G$; therefore showing that this multipartite graph can be used in practice for complex network modelling.

%%%%%%%%%%%%%%%%%%%%%%%%%%%%%%%%%%%%%%%%%%%%%%%%%%%%%%%%%%%%%%%%
%%%%%%%%%%%%%%%%%%%%%%%%%%%%%%%%%%%%%%%%%%%%%%%%%%%%%%%%%%%%%%%%
\subsubsection*{Outline of the paper}
%%%%%%%%%%%%%%%%%%%%%%%%%%%%%%%%%%%%%%%%%%%%%%%%%%%%%%%%%%%%%%%%
%%%%%%%%%%%%%%%%%%%%%%%%%%%%%%%%%%%%%%%%%%%%%%%%%%%%%%%%%%%%%%%%

We first give a few notations and basic definitions useful in the whole paper.
% including the definition of a fundamental notion, the {\em factorisation}, which will play a key role in the following.
We then consider the most immediate generalisation of the bipartite decomposition (Section~\ref{sec-wfs})
%We first discuss it informally in order to give an intuition to the reader. We then formally define it
 and show that it leads to infinite decompositions in some cases. We propose a more restricted version in Section~\ref{sec-fs}, which seems to converge but for which the question remains open. Finally, we propose another restricted version in Section~\ref{sec-cfs} for which we prove that the decomposition scheme always terminates.
%Both definitions lead to new interesting combinatorial objects, with subtle and complex properties, which we believe are worth of study in themselves.

%%%%%%%%%%%%%%%%%%%%%%%%%%%%%%%%%%%%%%%%%%%%%%%%%%%%%%%%%%%%%%%%
%%%%%%%%%%%%%%%%%%%%%%%%%%%%%%%%%%%%%%%%%%%%%%%%%%%%%%%%%%%%%%%%
\subsubsection*{Notations and preliminary definitions}
%%%%%%%%%%%%%%%%%%%%%%%%%%%%%%%%%%%%%%%%%%%%%%%%%%%%%%%%%%%%%%%%
%%%%%%%%%%%%%%%%%%%%%%%%%%%%%%%%%%%%%%%%%%%%%%%%%%%%%%%%%%%%%%%%

All graphs considered here are finite, undirected and simple (no loops and no multiple edges). A graph $G$ having vertex set $V$ and edge set $E$ will be denoted by $G=(V,E)$. We also denote by $V(G)$ the vertex set of $G$. The edge between vertices $x$ and $y$ will be indifferently denoted by $xy$ or $yx$.

A $k$-partite graph $G$ is a graph whose vertex set is partitioned into $k$ parts, with edges between vertices of different parts only (a bipartite graph is a $2$-partite graph, a tripartite graph a $3$-partite graph, etc):
$G=(V_0,\ldots,V_{k-1},E)$ with $E \subseteq \set{uv\ |\ u \in V_i, v \in V_j, i\not=j}$.
The vertices of $V_i$, for any $i$, are called the {\em $i$-th level} of $G$, and the vertices of $V_{k-1}$ are called its \emph{upper vertices}.

$\cur{K}(G)$ denotes the set of maximal cliques of a graph $G$, and $N^G(x)$ the neighbourhood of a vertex $x$ in $G$. When $G=(V_0,\ldots,V_{k-1},E)$ is $k$-partite, we denote by $N_i^G(x)$, where $0\leq i\leq k-1$, the set of neighbours of $x$ at level $i$: $N_i^G(x)=N^G(x)\cap V_i$. When the graph referred to is clear from the context, we omit it in the exponent.
A \emph{biclique} of a graph is a set of vertices of the graph inducing a complete bipartite graph.
We denote $B(G)$ the clique incidence graph of $G=(V,E)$, {\em i.e.} its bipartite decomposition: $B(G)=(V,\cur{K}(G),E')$ where $E' = \{vc \ |\ c \in \cur{K}(G), \ v \in c\}$.

In all the paper, an operation will play a key role, we name it \emph{factorisation} and define it generically as follows.

\begin{definition}[factorisation]
Given a $k$-partite graph $G=(V_0,\ldots,V_{k-1},E)$ with $k\geq 2$ and a set $V'_k$ of subsets of $V(G)$, we define the factorisation of $G$ with respect to $V'_k$ as the $(k+1)$-partite graph $G'=(V_0,\ldots,V_k,(E\setminus E_-)\cup E_+)$ where:
\begin{itemize}
\item $V_k$ is the set of maximal (with respect to inclusion) elements of $V'_{k}$,
\item $E_-=\set{yz\ |\ \exists X\in V_{k}, y\in X\cap V_{k-1} \text{ and } z\in X\setminus V_{k-1}}$, and
%\item $E_-= \bigcup_{X\in V_k} X\times X$,
\item $E_+=\set{Xy\ |\ X\in V_{k} \text{ and } y\in X}$.
\end{itemize}
When $V_k\not=\emptyset$, the factorisation is said to be \emph{effective}.
\end{definition}

In the rest of the paper, we will refine the notion of factorisation by using different sets $V'_k$ on which is based the factorisation operation, and we will study termination of the graph series resulting from each of these refinements.

The converse operation of the factorisation operation is called \emph{projection}.

\begin{definition}[projection]
Given a $k$-partite graph $G=(V_0,\ldots,V_{k-1},E)$ with $k\geq 3$, we define the projection of $G$ as the $(k-1)$-partite graph $G'=(V_0,\ldots,V_{k-2},(E\cap (\bigcup_{1\leq i\leq k-2}V_i)^2)\cup A_+)$ where $A_+=\set{yz\ |\ \exists i,j \in \int{1,k-2}, i\neq j \text{ and } y\in V_i \text{ and } z\in V_j \text{ and } \exists t\in V_{k-1}, yt,zt \in E}$ is the set of edges between any pair of vertices of $\bigcup_{1\leq i\leq k-2}V_i$ having a common neighbour in $V_{k-1}$.
\end{definition}

It is worth to note that the projection is the converse of the factorisation operation independently from the set $V'_k$ used in the definition of the factorisation.

%%%%%%%%%%%%%%%%%%%%%%%%%%%%%%%%%%%%%%%%%%%%%%%%%%%%%%%%%%%%%%%%
%%%%%%%%%%%%%%%%%%%%%%%%%%%%%%%%%%%%%%%%%%%%%%%%%%%%%%%%%%%%%%%%
\section{Weak factor series}\label{sec-wfs}
%%%%%%%%%%%%%%%%%%%%%%%%%%%%%%%%%%%%%%%%%%%%%%%%%%%%%%%%%%%%%%%%
%%%%%%%%%%%%%%%%%%%%%%%%%%%%%%%%%%%%%%%%%%%%%%%%%%%%%%%%%%%%%%%%

As explained before, our goal is to improve the bipartite model of~\cite{ipl04guillaume,physicaa06guillaume} in order to be able to encode non-trivial clique overlaps, that is overlaps whose cardinality is at least two. Since these overlaps in the graph result from the neighbourhood overlaps of the upper vertices, the purpose of the new model we propose is to encode the graph into a multipartite one by recursively eliminating all non-trivial neighbourhood overlaps of the upper vertices. We first describe this process informally, then give its formal definition and exhibit an example for which it does not terminate.

Neighbourhood overlaps of the upper vertices in a bipartite graph $B=(V_0,V_1,E)$ may be encoded as follows. For any maximal\footnote{The reason why one would take the maximal bicliques is simply to try to encode all neighbourhood overlaps using a reduced number of new vertices. Notice that there are other ways to reduce even more the number of new vertices created, for example by taking a biclique cover of the edge set of $B$. This is however out of the scope of this paper.} biclique $C$ that involves at least two upper vertices and two other vertices, we introduce a new vertex $x$ in a new level $V_2$, add all edges between $x$ and the elements of $C$, and delete all the edges of $C$, as depicted on Figure~\ref{fig-multipartite}. We obtain this way a tripartite graph $T = (V_0,V_1,V_2,E')$ which encodes $B$ (one may obtain $B$ from $T$ by the projection operation) and which has no non-trivial neighbourhood overlaps in its first two level ($V_0$ and $V_1$).

\begin{figure}[!h]
\centering
\includegraphics[scale=0.33]{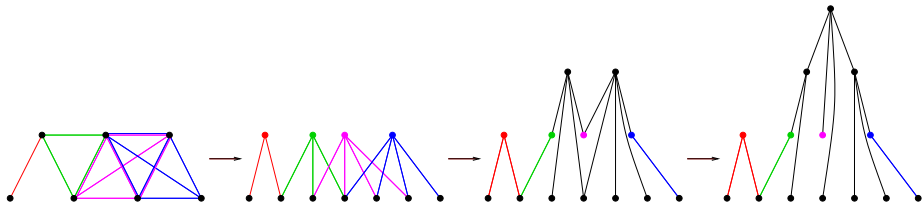}
\caption{Example of multipartite decomposition of a graph. From left to right: the original graph; its bipartite decomposition; its tripartite decomposition; and its quadripartite decomposition, in which there is no non-trivial neighbourhood overlap anymore. In this case, the decomposition process terminates.}
\label{fig-multipartite}
\end{figure}

This process, which we call a \emph{factorising step}, may be repeated on the tripartite graph $T$ obtained (as well as on any multipartite graph) by considering the bipartite graph between the upper vertices and the other vertices of the tripartite (or multipartite) graph, see Figure~\ref{fig-multipartite}.
All $k$-partite graphs obtained along this iterative factorising process have no non-trivial neighbourhood overlap between the vertices of their $k-1$ first levels. Then, the key question is to know whether the process terminates or not.

%Since the bipartite graph induced by the vertices of $V_0\cup V_1$ in $T=(V_0,V_1,V_2,E)$ does not contain any non-trivial overlaps of the neighbourhoods of the vertices of $V_1$, one can now focus on factorising neighbourhood overlaps between vertices of $V_2$. This may be done by repeating the previous process on the bipartite graph $B'=(V_0\cup V_1, V_2, E')$, where $E'$ is the subset of edges of $T$ between level $V_2$ and the rest of the graph: $E' = E \cap (V_0\cup V_1 \times V_2)$.This step results in a quadripartite graph which has no non-trivial neighbourhood overlaps in the first three levels.

%This factorising process can be iteratively applied in an attempt to remove all neighbourhood overlaps. 

\smallskip

We will now formally define the factorising process and show that it may result in an infinite sequence of graphs. In the following sections, we will restrict the definition of the factorising step in order to always obtain a finite representation of the graph.

\begin{definition}[$V^\bullet_{k}$ and weak factor graph]\label{weakfactor}
Given a $k$-partite graph $G=(V_0,\ldots,V_{k-1},E)$ with $k\geq 2$, we define the set $V^\bullet_{k}$ as:
$$V^\bullet_{k}=\set{\set{x_1,\ldots,x_l}\cup\bigcap_{1\leq i\leq l} N(x_i)\ |\ l \ge 2,\ \forall i\in\int{1,l}, x_i\in V_{k-1} \text{ and } |\bigcap_{1\leq i\leq l} N(x_i)|\geq 2}.$$
The \emph{weak factor graph} $G^\bullet$ of $G$ is the factorisation of $G$ with respect to $V^\bullet_{k}$.
\end{definition}

The weak factorisation admits a converse operation, called projection, which is defined in Section~\ref{sec-intro}.
It implies that the factor graph of $G$, as well as its iterated factorisations, is an encoding of $G$.

The \emph{weak factor series} defined below is the series of graphs produced by recursively repeating the weak factorising step.

\begin{definition}[weak factor series $\cur{WFS}(G)$]
The weak factor series of a graph $G$ is the series of graphs $\cur{WFS}(G)=(G_i)_{i\geq 1}$ in which $G_1=B(G)$ is the clique incidence graph of $G$ and, for all $i\geq 1$, $G_{i+1}$ is the weak factor graph of $G_i$: $G_{i+1} = G^\bullet_{i}$. If for some $i\geq 1$ the weak factor operation is not effective then we say that the series is {\em finite}.
\end{definition}

\begin{figure}[!h]
\centering
\includegraphics[scale=0.5]{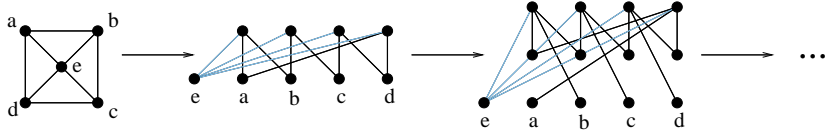}
\caption{An example graph for which the weak factorising process is infinite. From left to right: the original graph $G$, its bipartite decomposition $B(G)$, and its tripartite decomposition $B(G)^\bullet$. The shaded edges are the ones involving vertex $e$, which play a special role: all the vertices of the upper level of the decompositions are linked to $e$. The structure of the tripartite decomposition is very similar to the one of the bipartite decomposition, revealing that the process will not terminate.}
\label{fig-infinite}
\end{figure}

Figure~\ref{fig-multipartite} gives an illustration for this definition. In this case, the weak factor series is finite. However, this is not true in general; see Figure~\ref{fig-infinite}. Intuitively, this is due to the fact that a vertex may be the base for an infinite number of factorising steps (like vertex $e$ in the example of Figure~\ref{fig-infinite}). The aim of the next sections is to avoid this case by giving more restrictive definitions.

%%%%%%%%%%%%%%%%%%%%%%%%%%%%%%%%%%%%%%%%%%%%%%%%%%%%%%%%%%%%%%%%
%%%%%%%%%%%%%%%%%%%%%%%%%%%%%%%%%%%%%%%%%%%%%%%%%%%%%%%%%%%%%%%%
\section{Factor series}\label{sec-fs}
%%%%%%%%%%%%%%%%%%%%%%%%%%%%%%%%%%%%%%%%%%%%%%%%%%%%%%%%%%%%%%%%
%%%%%%%%%%%%%%%%%%%%%%%%%%%%%%%%%%%%%%%%%%%%%%%%%%%%%%%%%%%%%%%%

In the previous section, we have introduced weak factor series which appear to be the most immediate extension of bipartite decompositions of graphs. We showed that, unfortunately, weak factor series are not necessarily finite.
%, which is an important drawback as one does not know when to stop the decomposition : all graphs obtained along the process contain non-trivial overlaps of the neighbouhood of their upper vertices.
In this section, we introduce a slightly more restricted definition that forbids the repeated use of a same vertex to produce infinitely many factorisations (as observed on the example of Figure~\ref{fig-infinite}). However, we have no proof that it necessarily gives finite series, which remains an open question.

\begin{definition}[$V^\circ_{k}$ and factor graph]\label{factor}
Given a $k$-partite graph $G=(V_0,\ldots,V_{k-1},E)$ with $k\geq 2$, we define the set $V^\circ_{k}$ as:
$$V^\circ_{k}\ =\ \set{X\in V^\bullet_k \text{ such that } |\bigcap_{y\in X\cap V_{k-1}} N_{k-2}(y)| \geq 2}.$$
The \emph{factor graph} $G^\circ$ of $G$ is the factorisation of $G$ with respect to $V^\circ_{k}$.
\end{definition}

This new definition results from the restriction of the weak factor definition by considering only sets $X\in V^\bullet_k$ such that the vertices of $X\cap V_{k-1}$ have at least two common neighbours at level $k-2$.
In this way, the creation of new vertices depends only on the edges between levels $k-1$ and $k-2$ (even though some other edges may be involved in the factorisation operation). Thus, a vertex will not be responsible for infinitely many creations of new vertices.
This restriction also plays a key role in the convergence proof of the \emph{clean factor series}, defined in next section. That is why we think it may be possible that it is sufficient to guarantee the convergence of the factor series, but we could not prove it with this sole hypothesis.

%\begin{definition}[factor series $\cur{FS}(G)$]
%The factor series of a graph $G$ is the series of graphs $\cur{FS}(G)=(G_i)_{i\geq 1}$ in which $G_1=B(G)$ is the clique incidence graph of $G$ and, for all $i\geq 1$, $G_{i+1}$ is the factor graph of $G_{i}$: $G_{i+1} = G^\circ_{i}$. If for some $i$ the factor operation is not effective then we say that the series is {\em finite}.
%\end{definition}

%\begin{question}
%Is there a graph $G$ such that its factor series is infinite?
%\end{question}

%%%%%%%%%%%%%%%%%%%%%%%%%%%%%%%%%%%%%%%%%%%%%%%%%%%%%%%%%%%%%%%%
%%%%%%%%%%%%%%%%%%%%%%%%%%%%%%%%%%%%%%%%%%%%%%%%%%%%%%%%%%%%%%%%
\section{Clean factor series}\label{sec-cfs}
%%%%%%%%%%%%%%%%%%%%%%%%%%%%%%%%%%%%%%%%%%%%%%%%%%%%%%%%%%%%%%%%
%%%%%%%%%%%%%%%%%%%%%%%%%%%%%%%%%%%%%%%%%%%%%%%%%%%%%%%%%%%%%%%%

In the two previous sections, we studied two multipartite decompositions of graphs. The first one is very natural but it does not lead to finite objects. The second one remains very general but we were unable to prove that it leads to finite object. As a first step towards this goal, we introduce here a more restricted definition for which we prove that the decomposition is finite. This new combinatorial object has many interesting features, and we consider it worth of study in itself. In particular, we prove that it is a decomposition of a well-known combinatorial object: the inf-semi-lattice of the intersections of maximal cliques of $G$. This correspondence allows to calculate quantities of graph $G$ from elements of $M$. One of such results is an explicit formula (not presented here) giving the number of triangles in $G$, which is a very important parameter of complex networks.
% These points are the subjects of our further works and will be proved in a next paper. The multipartite decomposition introduced in this section may also help in designing a proof for the definition of the previous section, or lead to the construction of a counter-example.

The \emph{clean factor graph} (defined below) is a proper restriction of the factor graph in which the vertices at level $k-1$ used to create a new vertex at level $k$ are required to have exactly the same neighbourhoods at all levels strictly below level $k-2$, except at level $1$. Intuitively, this requirement implies that the new factorisations push further the previous ones and are not simply a rewriting at a higher level of a factorisation previously done. The particular role of level $1$ will allow us to differentiate vertices of the multipartite graph by assigning them sets of nodes at level $0$. Let us now formally define the clean factor graph and its corresponding series.

\begin{definition}[$V^*_{k}$ and clean factor graph]\label{CleanFactor}
Given a $k$-partite graph $G=(V_0,\ldots,V_{k-1},E)$ with $k\geq 4$, we define the set $V^*_{k}$ as:
$$V^*_{k}\ =\ \set{X\in V^\circ_k\ |\ \forall x,y\in X\cap V_{k-1},\forall p\in\set{0}\cup\int{2,k-3}, N_p(x)=N_p(y)}.$$
The \emph{clean factor graph} $G^*$ of $G$ is the factorisation of $G$ with respect to $V^*_{k}$.
\end{definition}

\begin{definition}[clean factor series $\cur{CFS}(G)$]
The clean factor series of a graph $G$ is the series of graphs $\cur{CFS}(G)=(G_i)_{i\geq 1}$ in which $G_1=B(G)$ is the clique incidence graph of $G$, $G_2=G^\circ_1$, $G_3=G^\circ_2$ and, for all $i\geq 3$, $G_{i+1}$ is the clean factor graph of $G_{i}$: $G_{i+1} = G^*_{i}$. If for some $i$ the clean factor operation is not effective then we say that the series is {\em finite}.
\end{definition}

The rest of this section is devoted to proving the following theorem.

\begin{theorem}\label{CFSstop}
For any graph $G$, the clean factor series $(G_i)_{i\geq 1}$ is finite.
\end{theorem}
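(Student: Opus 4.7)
The plan is to show that the total number of vertices created across the entire series $(G_i)_{i \geq 1}$ is bounded by a finite quantity depending only on $G$. Since each effective clean factorisation produces at least one new vertex at the new top level, this forces $V^*_k = \emptyset$ at some step, which is exactly termination. My intended bound is the number of chains in the inf-semilattice $\mathcal{S}$ of non-empty intersections of maximal cliques of $G$, consistent with the bijective correspondence hinted at in the introduction.

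First I would prove a basic stability lemma: for every vertex $v$ born at level $k$ in some $G_{i_0}$, its level-0 neighbourhood $I(v):=N_0^{G_{i_0}}(v)$ is an intersection of maximal cliques of $G$, hence an element of $\mathcal{S}$. This follows inductively: at level $1$, $I(v)$ is a maximal clique itself; at level $k \geq 2$, the factorisation set has $X \cap V_0 = \bigcap_j N_0(x_j)$, and by the induction hypothesis each $N_0(x_j)$ equals the parent's $I$-value (unchanged, since the only edge deletions $E_-$ affecting $x_j$ happened when $x_j$ was itself used, and in the clean regime the children's $N_0$ values are all equal, so their intersection coincides with any one of them).

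Next comes the key construction: assign to every vertex $v$ at level $k \geq 1$ a chain $\chi(v) = (J_1 \supsetneq J_2 \supsetneq \cdots \supsetneq J_m)$ of elements of $\mathcal{S}$, built by following $v$'s factorisation ancestry down to a level-$1$ leaf. At level $1$, $\chi(v)=(I(v))$. At level $k\geq 2$, I would extend the chain obtained from a canonically chosen child $x_{j^\star}$ by appending a new element that records where the \emph{strict} refinement happens. The crucial observation is that the clean condition pins down $N_p$ for $p\in\{0\}\cup\int{2,k-3}$ but leaves levels $1$ and $k-2$ free. Since the factorisation is effective and $V^\circ_k$ requires $|\bigcap N_{k-2}(x_j)|\geq 2$, the children $x_j$ must differ in their level-$1$ or level-$(k-2)$ neighbourhoods, so that the intersection $\bigcap_j N_{k-2}(x_j)$ is a proper subset of each $N_{k-2}(x_j)$. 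Translating this level-$(k-2)$ refinement into an element of $\mathcal{S}$ (via $I(v)$ and/or auxiliary maximal-clique data) yields a strictly smaller semilattice element to append.

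Once the chain is defined I would prove injectivity of $v \mapsto \chi(v)$ over all levels and all $G_i$: from a chain one can reconstruct the factorisation ancestry, using the maximality of factorisation sets to uniquely recover the set of children at each level. Since strictly decreasing chains in $\mathcal{S}$ have length at most $|V(G)|$, the total number of chains is finite, and with injectivity this caps the global vertex count. The main obstacle I anticipate is precisely step three: making the chain provably strictly decreasing despite the clean condition freezing $N_0$ between consecutive levels. The resolution, as sketched, relies on the interaction between the $V^\circ_k$ condition at level $k-2$ and the clean condition at the other levels, and on choosing the right canonical child $x_{j^\star}$ so that the appended element encodes genuinely new information and not a mere rewriting of a previously present refinement.
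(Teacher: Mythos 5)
Your overall strategy is the same as the paper's: associate to each vertex a strictly monotone chain in the inf-semilattice of intersections of maximal cliques of $G$, and conclude because such chains have length at most $|V(G)|$ (in fact only the strict monotonicity is needed for termination --- the injectivity you also aim for gives the stronger bijection result, but $V_{n+1}=\varnothing$ already follows from chains of length $k-1$ being strictly increasing). The problem is that the entire difficulty of the theorem lives in the step you yourself flag as "the main obstacle", and your sketch of it does not go through as stated. You argue that since the clean condition freezes $N_p$ for $p\in\set{0}\cup\int{2,k-3}$, distinct children $x_j$ must differ at level $1$ or level $k-2$, "so that $\bigcap_j N_{k-2}(x_j)$ is a proper subset of each $N_{k-2}(x_j)$". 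This is a non sequitur twice over: the children could a priori differ only at level $1$; and even if two sets differ, their intersection is a proper subset of at least one of them, not of each (take $N_{k-2}(x_1)\subsetneq N_{k-2}(x_2)$). Moreover, "translating this level-$(k-2)$ refinement into an element of $\cur{S}$ via auxiliary maximal-clique data" is exactly the part that needs a concrete definition and a proof, and none is given.

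What the paper does at this point is define $O_j(x)$ through the \emph{level-$1$} neighbourhoods (the element of $\cur{O}'$ with $K(O_j(x))=\bigcap_{y\in V_j(x)}V_1(y)$, i.e.\ the intersection of all maximal cliques attached to the level-$j$ neighbours of $x$), and then prove by a simultaneous induction a much stronger structural fact: the level-$j$ neighbourhood of $x$ is \emph{exactly} the interval $\set{y\in V_j\ |\ O_{j-1}(x)\subseteq O_{j-1}(y)\subseteq O_j(x)}$ of vertices sharing $x$'s prefix. Strictness $O_{j-1}(x)\subsetneq O_j(x)$ then falls out because this interval has at least two elements and level-$j$ vertices are already known (by the injectivity at level $j$, proved in the same induction) to be determined by their sequences. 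So strictness, injectivity, and the interval characterisation must be proved together; your plan treats strictness as a local computation on one factorisation step, and that is precisely where it breaks. Until you supply an analogue of this inductive characterisation (or another concrete mechanism forcing the appended semilattice element to be strictly smaller), the proof has a genuine gap at its central step.
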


%It allows us to give the following definitions.
%
%\begin{definition}
%The rank of the last graph in the clean factor series of $G$ is called the \emph{clean factorising height} of $G$, and denoted $cfh(G)$.
%The last graph $G_{cfh(G)}$ of $\cur{CFS}(G)$ is called the \emph{multipartite representation} of $G$ and denoted $M(G)$.
%\end{definition}

\begin{notation}\label{ThickClean}
Let $(G_i)_{i\geq 1}$ be the clean factor series of $G$. For any $i\geq 1$, any $x\in V_i$ and any $j<i$, we denote by $V_j(x)$ the set $N^{G_i}_j(x)$ and by $V(x)$ the set $\bigcup_{0\leq j< i} V_j(x)$.
\end{notation}

% Remarque necessaire ?? plutot opaque...
\begin{remark}
In the rest of the paper, when referring to Definition~\ref{CleanFactor}, it is worth keeping in mind that for $x\in V_{k-1}$ and $p\leq k-3$, the sets $N_p(x)$ and $N_p(y)$ used in the definition are precisely the sets $V_p(x)$ and $V_p(y)$.
\end{remark}

\begin{definition}\label{DefOK}
We denote by $\cur{O}'$ the set $\set{O\subseteq V(G)\ |\ \exists k\geq 2, \exists C_1,\ldots ,C_k\in \cur{K}(G), (\forall j,l\in\int{1,k}, j\neq l \Rightarrow C_j\neq C_l) \text{ and } O=\bigcap_{1\leq i\leq k} C_i}$; and by $\cur{O}$ the set $\set{O\in\cur{O}'\ |\ |O|\geq 2}$.
For any $O\in\cur{O}'$, we denote by $K(O)$ the set $\set{C\in \cur{K}(G)\ |\ O\subseteq C}$.
We also denote by $\cur{C}$ the set $\set{Y\subseteq \cur{K}(G)\ |\ \exists O\in\cur{O}', Y=K(O)}$.
\end{definition}

It is clear from the definition that $\cur{O}'$ is closed under intersection, this is also the case for $\cur{C}$.

%%% COMMENTAIRE
\commentaire{
\begin{proposition}\label{PteOK}
For any $A,B\in V(G)$, $K(A)\cap K(B)=K(A\cup B)$.
Conversely, if $A_1,\ldots,A_n,C\in\cur{O}'$, with $n\geq 2$, and $\bigcap_{1\leq i\leq n} K(A_i)=K(C)$ then $\bigcup_{1\leq i\leq n} A_i\subseteq C$.
\end{proposition}
\begin{proof}
Let $A,B\in V(G)$.
The cliques in $K(B)\cap K(C)$ are exactly the cliques that contain both $A$ and $B$, {\em i.e.} the cliques that contain $A\cup B$. Therefore $K(A)\cap K(B)=K(A\cup B)$.

Let $A_1,\ldots,A_n,C\in\cur{O}'$, with $n\geq 2$, such that $\bigcap_{1\leq i\leq n} K(A_i)=K(C)$. From what precedes, $\bigcap_{1\leq i\leq n} K(A_i)=K(\bigcup_{1\leq i\leq n} A_i)$. Consequently, we have $K(\bigcup_{1\leq i\leq n} A_i)=K(C)$. Then, $\bigcup_{1\leq i\leq n} A_i\subseteq \bigcap_{y\in K(\bigcup_{1\leq i\leq n} A_i)} V_0(y)=\bigcap_{y\in K(C)} V_0(y)$. On the other hand, since $C\in\cur{O}'$, $C=\bigcap_{y\in K(C)} V_0(y)$. It follows that $\bigcup_{1\leq i\leq n} A_i\subseteq C$.
\end{proof}
} %%% COMMENTAIRE

%Notice that this implies that $\cur{C}$ is closed under intersection. The fact that $\cur{O}'$ is also closed under intersection is clear from the definition.

In all the $G_i$'s of the clean factor series, vertices at level $0$ correspond to vertices of $G$, vertices at level $1$ correspond to the maximal cliques of $G$, that is for any $y\in V_1, V_0(y)\in\cur{K}(G)$. That is the reason why in the following we do not distinguish between the elements of $\cur{K}(G)$ and those of $V_1$.
We will show that the vertices of $V_2$ correspond to the elements of $\cur{O}$.
Indeed, $x\mapsto V_0(x)$ is a bijection from $V_2$ to $\cur{O}$.
First, for any $x\in V_2$, by definition, $|V_0(x)|\geq 2$, then $V_0(x)=\bigcap_{y\in V_1(x)} V_0(y)$ belongs to $\cur{O}$.
Let $O\in\cur{O}$. Let us show that $X=K(O)\cup\bigcap_{y\in K(O)} V_0(y)$ is a maximal element of $V^\circ_2$. First note that $X\cap V_0=O$ and then $|X\cap V_0|\geq 2$. Now, if you augment $X$ with an element of $y\in V_1\setminus K(O)$, since $y\not\in K(O)$, $X\cap V_0$ will decrease. Thus $X$ is maximal and there is a corresponding $x\in V_2$ such that $V_0(x)=O$.
Furthermore, it is straightforward to see that the maximality of $V(x)$ implies that $V_1(x)=K(O)$. Which proves the uniqueness of the $x\in V_2$ such that $V_0(x)=O$.

\begin{definition}\label{DefCharSeq}
Let $G$ be a graph and let $(G_i)_{i\geq 1}$ be its clean factor series.
The \emph{characterising sequence} $S(x)=(O_1(x),\ldots ,O_{k-1}(x)$ of a vertex $x\in V_k$, with $k\geq 2$, is defined by:
\begin{itemize}
\item $O_1(x)=V_0(x)$
\item $\forall j\in\int{2,k-1}, O_j(x)$ is the unique element\footnote{By convention, $O_j(x)=V(G)$ when $\bigcap_{y\in V_j(x)} V_1(y)=\varnothing$.} of $\cur{O'}$ such that $K(O_j(x))=\bigcap_{y\in V_j(x)} V_1(y)$.
\end{itemize}
\end{definition}

Note that $O_j$ is properly defined. Indeed, since $\cur{C}$ is closed under intersection, a simple recursion would show that for all $i\geq 3$ and for all $y\in V_i $, $V_1(y)=\bigcap_{z\in V_{i-1}} V_1(z)\in \cur{C}$.

%\smallskip
%\begin{remark}\label{V1}
%For any $x\in V_k$, with $k\geq 2$, $K(O_{k-1}(x))=V_1(x)$.
%\end{remark}
\smallskip

Theorem~\ref{ThCharSeq} is our main combinatorial tool for proving the finiteness of the clean factor series (Theorem~\ref{CFSstop}). Its proof is rather intricate, but it gives much more information than the finiteness of the series. By associating a sequence of sets to each vertex in levels greater than $V_2$ in the multipartite graph, we show that each such vertex corresponds to a chain of the inf-semi-lattice of the intersections of maximal cliques of $G$. The correspondence thereby highlighted between this very natural structure and the multipartite factorisation scheme we introduced is non-trivial and of great combinatorial interest.

%%%%%%%%%%%%%%%%%%%%%%%%%%%%%%
\begin{theorem}
\label{ThCharSeq}
Let $G$ be a graph and $(G_i)_{i\geq 2}$ its clean factor series. We then have the following properties:
\begin{enumerate}
\item\label{strict} $\forall k\geq 2$, $\forall x\in V_k$, $O_1(x)\subsetneq \ldots \subsetneq O_{k-1}(x)$ and if $k=3$, $O_2(x)\in \cur{O}$ and if $k\geq 4$, $(O_2(x), \ldots , O_{k-2}(x))\in \cur{O}^{k-3}$ %\forall i,j\in\int{1,k-1}, i\neq j \Rightarrow O_i(x)\subsetneq O_j(x)$,

\item\label{inj} $\forall k\geq 2$, $\forall x,y\in V_k$, $x\neq y \Rightarrow S(x)\neq S(y)$,
\item\label{surj} $\forall k\geq 2$, $\forall (O_1,\ldots ,O_{k-1})\in \cur{O}^{k-1}$, $O_1\subsetneq \ldots \subsetneq O_{k-1} \Rightarrow \exists x\in V_k, S(x)=(O_1,\ldots ,O_{k-1})$.
\end{enumerate}
\end{theorem}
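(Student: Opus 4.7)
\medskip
\noindent\textbf{Proof plan.}
My plan is to prove all three parts simultaneously by induction on $k$, with the base cases $k=2$ and $k=3$ handled separately. The key ingredient, which I would isolate as a preliminary lemma and prove by induction on $j$, is the identity $V_1(y)=K(O_{j-1}(y))$ for every $y\in V_j$ with $j\geq 2$. The induction step follows because a new $y\in V_j$ satisfies $V_1(y)=\bigcap_i V_1(y_i)$ over its generating upper vertices $y_i\in V_{j-1}$; applying the induction hypothesis and the set identity $K(A)\cap K(B)=K(A\cup B)$ gives $\bigcap_i K(O_{j-2}(y_i))=K(\bigcup_i O_{j-2}(y_i))$, which matches $K(O_{j-1}(y))$ by Definition~\ref{DefCharSeq}.

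The base case $k=2$ is immediate from the bijection $V_2\leftrightarrow\cur{O}$ proved in the paragraph preceding Definition~\ref{DefCharSeq}. For $k=3$, any maximal $X=\set{x_1,\ldots,x_l}\cup\bigcap_i N(x_i)\in V^\circ_3$ yields $O_1(x)=\bigcap_i O_1(x_i)$ and, via the preliminary lemma, $K(O_2(x))=K(\bigcup_i O_1(x_i))$. The $V^\circ$ condition forces $|V_1(x)|\geq 2$, so $O_2(x)$ is the intersection of at least two distinct maximal cliques and lies in $\cur{O}$ since it contains any $O_1(x_i)\in\cur{O}$; the distinctness of the $x_i$'s, combined with the base-case bijection, then yields the strict inclusion $O_1(x)\subsetneq O_2(x)$. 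Parts~2 and~3 for $V_3$ follow by describing the maximal elements of $V^\circ_3$ explicitly.

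For the inductive step from $k$ to $k+1$ with $k\geq 3$, consider $x\in V_{k+1}$ arising from a maximal $X=\set{x_1,\ldots,x_l}\cup\bigcap_i N^{G_k}(x_i)\in V^*_{k+1}$. The clean condition forces $V_p(x_i)=V_p(x_j)$ for all $p\in\set{0}\cup\int{2,k-2}$, and combined with IH injectivity this yields $O_p(x_i)=O_p(x_j)$ for $p\in\set{1}\cup\int{2,k-2}$, so $O_p(x)=O_p(x_i)$ on this range: this is the common initial segment of the chain. The two new entries satisfy $K(O_k(x))=K(\bigcup_i O_{k-1}(x_i))$ by the preliminary lemma, and $K(O_{k-1}(x))=\bigcap_{y\in\bigcap_i V_{k-1}(x_i)}K(O_{k-2}(y))$. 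For Part~1, the chain up to $O_{k-2}(x)$ is inherited from IH; the inclusion $O_{k-2}(x)\subsetneq O_{k-1}(x)$ follows because the $V^\circ$ condition gives $|V_{k-1}(x)|\geq 2$ with distinct elements carrying distinct $O_{k-2}$-values all containing $O_{k-2}(x)$; and $O_{k-1}(x)\subsetneq O_k(x)$ holds because the $x_i$'s being distinct forces the $O_{k-1}(x_i)$ to be distinct (IH injectivity), all sandwiched between $O_{k-1}(x)$ and $O_k(x)$, and if these two endpoints coincided all $O_{k-1}(x_i)$ would have to agree.

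Parts~2 and~3 rely on a structural description of $V_k(x)$ and $V_{k-1}(x)$ purely in terms of $S(x)$. I expect maximality of $X$ in $V^*_{k+1}$ to force $V_k(x)$ to be exactly the set of $y\in V_k$ whose characterising sequence has prefix $(O_1(x),\ldots,O_{k-2}(x))$ and whose last entry $O_{k-1}(y)$ sits between $O_{k-1}(x)$ and $O_k(x)$ in the appropriate closure sense, with a parallel Galois-style description of $V_{k-1}(x)$. Once this characterisation is in place, injectivity is immediate: $X$ is recovered from $S(x)$, and then $x$ from $X$. The main obstacle will be surjectivity: given a strict chain $(O_1,\ldots,O_k)\in\cur{O}^k$, I would construct the $x_i$'s as the IH-preimages of all chains $(O_1,\ldots,O_{k-2},O')$ with $O'\in\cur{O}$ such that $O_{k-1}\subseteq O'$ and $K(O')\supseteq K(O_k)$, and verify that the resulting $X$ is maximal in $V^*_{k+1}$ and produces the target chain. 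The delicate point is matching both endpoints exactly: one must show that $\bigcup_i O_{k-1}(x_i)$ closes to $O_k$ without undershooting, and that $\bigcap_i V_{k-1}(x_i)$ yields $O_{k-1}$ without overshooting; this hinges on a careful handling of the closure $\text{cl}(A)=\bigcap_{C\in K(A)}C$ and of the Galois-style correspondence $O\leftrightarrow K(O)$ between $\cur{O}'$ and $\cur{C}$ that makes $\cur{O}'$ an inf-semi-lattice.
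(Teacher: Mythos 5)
Your plan follows essentially the same route as the paper's proof: a simultaneous induction on $k$ over all three properties together with a fourth auxiliary hypothesis characterising each $V_j(x)$ as the set of vertices of $V_j$ whose sequences share the prefix of $S(x)$ and whose last entry lies between $O_{j-1}(x)$ and $O_j(x)$, using the identities $V_1(y)=K(O_{j-1}(y))$ and $K(A)\cap K(B)=K(A\cup B)$ and the same explicit construction of a maximal $X$ for surjectivity. The one step to make sure you nail down is the paper's use of surjectivity at the previous level (with a separate argument in the case $k=4$) to exhibit a vertex realising $\bigcap_i O_{k-2}(a_i)$, which is what pins the interval endpoints to $O_{k-2}(x)$ and $O_{k-1}(x)$ exactly --- precisely the ``delicate point'' you flag at the end of your plan.
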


For lack of space, we do not give the proof of Theorem~\ref{ThCharSeq}. It can be made by recursion on $k$. The key of our proof is that we could characterise, for any $k\geq 3$, the vertices at level $k-1$ involved in the creation of a new vertex $x$ at level $k$ : roughly, they are those vertices $y$ such that there exist $O_1,\ldots ,O_{k-3},O_m,O_M \in \cur{O}$ and $S_y=(O_1,\ldots ,O_{k-3},O_k-2(y))$ is such that $O_m\subseteq O_{k-2}(y)\subseteq O_M$. Then, the characterising sequence of the created vertex $x$ is $S(x)=(O_1,\ldots ,O_{k-3},O_m,O_M)$.  
Please refer to the webpages of the authors for a complete version of the paper including proof of Theorem~\ref{ThCharSeq}.

%%%% COMMENTAIRE
\commentaire{

\begin{proof}
The proof is by recursion on $k$. We denote by $H_1(k)$ the claim that $\forall x\in V_k$, $O_1(x)\subsetneq \ldots \subsetneq O_{k-1}(x)$; by $H_2(k)$ the claim that $\forall x,y\in V_k$, $S(x)=S(y) \Rightarrow x=y$; and by $H_3(k)$ the claim that $\forall (O_1,\ldots ,O_{k-1})\in \cur{O}^{k-1}$, $O_1\subsetneq \ldots \subsetneq O_{k-1} \Rightarrow \exists x\in V_k, S(x)=(O_1,\ldots ,O_{k-1})$.
For $k\geq 3$, we will prove an additional property denoted $H_4(k)$. For $i\geq 3$, $2\leq j<i$ and $x\in V_i$, we denote by $S_j(x)$ the set $\set{y\in V_j\ |\ O_{j-1}(x)\subseteq O_{j-1}(y)\subseteq O_j(x) \text{ and } (O_1(y),\ldots ,O_{j-2}(y))=(O_1(x),\ldots ,O_{j-2}(x))}$. Then we denote by $H_4(k)$ the claim that $\forall i\leq k$, $\forall 2\leq j<i$, $\forall x\in V_i$, $V_j(x)=S_j(x)$.

\paragraph{Initialisation.} It is straightforward that $H_1(2)$, $H_2(2)$ and $H_3(2)$ are true.
We will prove that $H_1(3)$, $H_2(3)$, $H_3(3)$ and $H_4(3)$ are true.

\smallskip
\noindent\emph{Proof of $H_4(3)$.}
Since $S_j(x)$ is defined only for $x\in V_i$ with $i\geq 3$ and $2\leq j<i$, $H_4(3)$ is equivalent to: $\forall x\in V_3, V_2(x)=\set{y\in V_2\ |\ O_1(x)\subseteq O_1(y)\subseteq O_2(x)}$.
Let $x\in V_3$; we denote by $a_1,\ldots ,a_l$ the elements of the set $V_2(x)$.
Clearly, for any $i\in\int{1,l}$, $\bigcap_{1\leq i\leq l} V_0(a_i)\subseteq V_0(a_i)\subseteq \bigcup_{1\leq i\leq l} V_0(a_i)$.
By definition, $O_1(x)=V_0(x)=\bigcap_{1\leq i\leq l} V_0(a_i)$, and $K(O_2(x))=\bigcap_{1\leq i\leq l} V_1(a_i)=\bigcap_{1\leq i\leq l} K(O_1(a_i))$. Moreover, from the definition of $V^\circ_{3}$, $|\bigcap_{1\leq i\leq l} V_1(a_i)|\geq 2$, which implies that $O_2(x)\in \cur{O}'$.
Then, from Proposition~\ref{PteOK}, $\bigcup_{1\leq i\leq l} O_1(a_i)\subseteq O_2(x)$. Consequently, for any $i\in\int{1,l}$, $O_1(x)\subseteq O_1(a_i)\subseteq O_2(x)$. That is $V_2(x)\subseteq S_2(x)$.
Conversely, we show that for any $y\in S_2(x)$, $V_0(x)\subseteq V_0(y)$ and $V_1(x)\subseteq V_1(y)$. First, we have $V_0(x)=O_1(x)\subseteq O_1(y)=V_0(y)$. Since $O_1(y)\subseteq O_2(x)$, we have $K(O_2(x))\subseteq K(O_1(y))=V_1(y)$. Since, by definition, $K(O_2(x))=V_1(x)$ (see the remark following Definition~\ref{DefCharSeq}) we obtain $V_1(x)\subseteq V_1(y)$. Since $x$ is maximal in $V^\circ_3$, this implies that $y\in V_2(x)$. This shows that $V_2(x)=S_2(x)$, and so $H_4(3)$ is true.

\smallskip
\noindent\emph{Proof of $H_1(3)$.}
Since $V_2(x)=S_2(x)$ and $|V_2(x)|\geq 2$, it follows that $O_1(x)\subsetneq O_2(x)$, otherwise $S_2(x)$ would contain only one element. Suppose for contradiction that $|O_2(x)|\leq 1$. Then, necessarily $O_1(x)=\varnothing$, and since there is no vertex $y$ of $V_2$ such that $O_1(y)=\varnothing$, it follows that $V_2(x)=S_2(x)$ has at most one element : contradiction. Thus $|O_2(x)|\geq 2$ and $O_2(x)\in \cur{O}$.
Therefore, $H_1(3)$ is true.

\smallskip
\noindent\emph{Proof of $H_2(3)$.}
For any $z\in V_3$, $V_2(z)=S_2(z)$. Thus, for any $x,y\in V_3$, $(O_1(x), O_2(x))=(O_1(y),O_2(y))$ implies that $V_2(x)=V_2(y)$, which implies that $x=y$. So $H_2(3)$ holds.

\smallskip
\noindent\emph{Proof of $H_3(3)$.}
Let $O_1,O_2\in \cur{O}$ such that $O_1\subsetneq O_2$.
Let $Y_{2}=\set{y\in V_{2}\ |\  O_{1}\subseteq O_1(y) \subseteq O_{2}}$.
Let $X=Y_{2}\cup\bigcap_{y\in Y_{2}} V(y)$. Since $O_{1}\subsetneq O_{2}$, $|X\cap V_{2}|\geq 2$.
Furthermore, $X\cap V_1=\bigcap_{y\in Y_2} V_1(y)=\bigcap_{y\in Y_2} K(O_1(y))$. Since, for any $y\in Y_2$, $O_1(y)\subseteq O_2$, we have also $K(O_2)\subseteq K(O_1(y))$. It follows that $K(O_2)\subseteq X\cap V_1$, and since $O_2\in\cur{O}$, $O_2$ is contained in at least two cliques. Thus, $|X\cap V_1|\geq 2$ and $X\in V^\circ_3$.
Let us show that $X$ is maximal in $V^\circ_3$.
As we saw in the paragraph preceding Definition~\ref{DefCharSeq}, for any $y\in V_2$, $V_1(y)=K(O_1(y))$.
It follows that $\bigcap_{y\in X\cap V_{2}} V_1(y)=\bigcap_{O_{1}\subseteq P\subseteq O_{2}} K(P)$. From Proposition~\ref{PteOK}, $\bigcap_{O_{1}\subseteq P\subseteq O_{2}} K(P)=K(\bigcup_{O_{1}\subseteq P\subseteq O_{2}} P =K(O_{2})$, and so $\bigcap_{y\in X\cap V_{2}} V_1(y)=K(O_2)$.
Let $z\in V_{2}$ such that $O_1\not\subseteq O_1(z)$ or $O_1(z)\not\subseteq O_{2}$. If $O_1(z)\not\subseteq O_{2}$, $\bigcap_{y\in X\cap V_{2}} V_1(y)=K(O_{2})\not\subseteq K(O_1(z))=V_1(z)$. Then, adding $z$ to $X\cap V_{2}$ would decrease $\bigcap_{y\in X\cap V_{2}} V_1(y)$.
We have $\bigcap_{y\in X\cap V_{2}} V_0(y)=\bigcap_{y\in X\cap V_{2}} O_1(y)=O_1$. If $O_1\not\subseteq O_1(z)$, then adding $z$ to $X\cap V_{2}$ would decrease $\bigcap_{y\in X\cap V_{2}} V_0(y)$.
Consequently, $X$ is maximal and vertex $x\in V_3$ corresponding to $X$ has the desired characterising sequence $(O_1,O_{2})$.
Finally, $H_3(3)$ is true.

\paragraph{Recursion for $k\geq 4$.}
Now, let us suppose that $k\geq 4$ and that $H_1(k-1)$, $H_2(k-1)$, $H_3(k-1)$ and $H_4(k-1)$ are true.

\smallskip
\noindent\emph{Proof of $H_4(k)$.}
From recursion hypothesis $H_4(k-1)$, $H_4(k)$ is true for all $i\leq k-1$.
Let $x\in V_k$. We denote by $a_1,\ldots ,a_l$ the elements of the set $V_{k-1}(x)$. Let $i,j\in\int{1,l}$ and $p\in\set{0}\cup\int{2,k-3}$. From Definition~\ref{CleanFactor} of the clean factor graph, we have that $V_p(a_i)=V_p(a_j)=V_p(x)$. It follows that $H_4(k)$ is also true for $i=k$ and $2\leq j\leq k-3$. Then we just have to prove that $V_{k-2}(x)=S_{k-2}(x)$ and $V_{k-1}(x)=S_{k-1}(x)$.

Let us denote by $S'$ the set $\set{y\in V_{k-2}\ |\ O_{k-3}(x)\subseteq O_{k-3}(y)\subseteq \bigcap_{1\leq i\leq l} O_{k-2}(a_i)$ and $(O_1(y),\ldots ,O_{k-4}(y))=(O_1(x),\ldots ,O_{k-4}(x))}$ and by $S''$ the set $\set{y\in V_{k-1}\ |\ \bigcap_{1\leq i\leq l} O_{k-2}(a_i)\subseteq O_{k-2}(y)\subseteq O_{k-1}(x) \text{ and } (O_1(y),\ldots ,O_{k-3}(y))=(O_1(x),\ldots ,O_{k-3}(x))}$.
In order to check that $S'$ and $S''$ are correctly defined, we prove that $O_{k-3}(x)\subseteq \bigcap_{1\leq i\leq l} O_{k-2}(a_i) \subseteq O_{k-1}(x)$. For any $i\in\int{1,l}$, since $V_{k-3}(x)=V_{k-3}(a_i)$, we have also $O_{k-3}(x)=O_{k-3}(a_i)\subseteq O_{k-2}(a_i)$ by recursion hypothesis. It gives $O_{k-3}(x)\subseteq \bigcap_{1\leq i\leq l} O_{k-2}(a_i)$. Furthermore, by definition, for all $i\in\int{1,l}$, $V_1(x)\subseteq V_1(a_i)$. Since $V_1(x)=K(O_{k-1}(x))$ and $V_1(a_i)=K(O_{k-2}(a_i))$, it follows that $K_(O_{k-1}(x))\subseteq K(O_{k-2}(a_i))$, and so $O_{k-2}(a_i)\subseteq O_{k-1}(x)$. It gives $\bigcap_{1\leq i\leq l} O_{k-2}(a_i)\subseteq O_{k-1}(x)$.
We will show that $S'=V_{k-2}(x)$ and $S''=V_{k-1}(x)$.

Let us start with $V_{k-2}(x)=S'$.
By definition of the clean factor graph, $V_{k-2}(x)=\bigcap_{1\leq i\leq l} V_{k-2}(a_i)$. From recursion hypothesis $H_4(k-1)$, we get that $\forall i\in\int{1,l}$, $V_{k-2}(a_i)=\set{y\in V_{k-2}\ |\ O_{k-3}(a_i)\subseteq O_{k-3}(y)\subseteq O_{k-2}(a_i)$ and $(O_1(y),\ldots ,O_{k-4}(y))=(O_1(a_i),\ldots ,O_{k-4}(a_i))}$. From Definitions~\ref{CleanFactor} and~\ref{DefCharSeq}, we have $O_{p}(x)=O_{p}(a_i)$ for any $i\in\int{1,l}$ and $p\in\int{1,k-3}$. Consequently, $V_{k-2}(x)=\bigcap_{1\leq i\leq l} V_{k-2}(a_i)=\set{y\in V_{k-2}\ |\ O_{k-3}(x)\subseteq O_{k-3}(y)\subseteq \bigcap_{1\leq i\leq l} O_{k-2}(a_i)$ and $(O_1(y),\ldots ,O_{k-4}(y))$ $=(O_1(x),\ldots ,O_{k-4}(x))}=S'$.

We now show that $V_{k-1}(x)\subseteq S''$.
Again, let $i,j\in\int{1,l}$ and $p\in\set{0}\cup\int{2,k-3}$. Since $V_p(a_i)=V_p(a_j)=V_p(x)$, it follows that, for $p=0$, $O_1(a_i)=O_1(a_j)$ and for $p\in\int{2,k-3}$, $O_p(a_i)=O_p(a_j)$. Then, for any $i\in\int{1,l}$, we have $(O_1(a_i),\ldots ,O_{k-3}(a_i))=(O_1(x),\ldots ,O_{k-3}(x))$.
By definition, $K(O_{k-1}(x))=V_1(x)=\bigcap_{1\leq i\leq l} V_1(a_i)$, and $V_1(a_i)=K(O_{k-2}(a_i))$. It follows that for any $i\in\int{1,l}$, $K(O_{k-1}(x))\subseteq K(O_{k-2}(a_i))$ and $O_{k-2}(a_i)\subseteq O_{k-1}(x)$. Thus, we have $\bigcap_{1\leq i\leq l} O_{k-2}(a_i)\subseteq O_{k-2}(a_i)\subseteq O_{k-1}(x)$. Finally, $V_{k-1}(x)\subseteq S''$.

We now show the converse: $S''\subseteq V_{k-1}(x)$. To do this, we show that any $y\in S''$ is such that $V_p(y)=V_p(x)$ for any $p\in\set{0}\cup\int{2,k-3}$ and $V_{k-2}(x)\subseteq V_{k-2}(y)$ and $V_1(x)\subseteq V_1(y)$.
From recursion hypothesis $H_4(k-1)$, we know that $V_0(y)=O_1(y)=O_1(x)=V_0(x)$ and that for any $p\in\int{2,k-3}$, $V_p(y)=\set{z\in V_p\ |\ O_{p-1}(y)\subseteq O_{p-1}(z)\subseteq O_p(y) \text{ and } (O_1(z),\ldots ,O_{p-2}(z))=(O_1(y),\ldots ,O_{p-2}(y))}$. Since for all $p\in\int{2,k-3}$, $O_p(y)=O_p(x)$, it follows that $V_p(y)=V_p(x)$.
Again from recursion hypothesis $H_4(k-1)$, we get $V_{k-2}(y)=\set{z\in V_{k-2}\ |\ O_{k-3}(y)\subseteq O_{k-3}(z)\subseteq O_{k-2}(y) \text{ and } (O_1(z),\ldots ,O_{k-3}(z))=(O_1(y),\ldots ,O_{k-3}(y))}$. On the other hand, $V_{k-2}(x)=\set{z\in V_{k-2}\ |\ O_{k-3}(x)\subseteq O_{k-3}(z)\subseteq \bigcap_{1\leq i\leq l} O_{k-2}(a_i)}$. From the definition of $S''$ we get $\bigcap_{1\leq i\leq l} O_{k-2}(a_i)\subseteq O_{k-2}(y)$. Still from the definition of $S''$, we have $O_{k-3}(y)=O_{k-3}(x)$. Together with $\bigcap_{1\leq i\leq l} O_{k-2}(a_i)\subseteq O_{k-2}(y)$, it implies that $V_{k-2}(x)\subseteq V_{k-2}(y)$.
In addition, since $O_{k-2}(y)\subseteq O_{k-1}(x)$, $K(O_{k-1}(x))\subseteq K(O_{k-2}(y))$. And consequently $V_1(x)\subseteq V_1(y)$.
Thus, by maximality of $V(x)$ (Definitions~\ref{ThickClean}~and~\ref{CleanFactor}), $y\in S''$ necessarily belongs to $V_{k-1}(x)$. As the other inclusion has been shown before, we conclude that $V_{k-1}(x)=S''$. 

In order to achieve the proof of $H_4(k)$, we now prove that there exists $b\in V_{k-1}$ such that $O_{k-2}(b)=\bigcap_{1\leq i\leq l} O_{k-2}(a_i)$ and $\forall j\in\int{1,k-3}, O_{j}(b)=O_j(x)$.

If $k\geq 5$, it is straightforward. For any $i\in\int{1,l}$, $O_{k-3}(x)=O_{k-3}(a_i)$, and since $a_i\in V_{k-1}$, by recursion hypothsesis $H_{1}(k-1)$, $O_{k-3}(a_i)$ and so $O_{k-3}(x)$ belongs to $\cur{O}$. Moreover, since for all $i\in\int{1,l}$, $O_{k-2}(a_i)\supseteq O_{k-3}(x)$, $\bigcap_{1\leq i\leq l} O_{k-2}(a_i)$ contains $O_{k-3}(x)$ and consequently belongs to $\cur{O}$ too.
Then, from recursion hypothesis $H_3(k-1)$, there exists $b\in V_{k-1}$ such that $O_{k-2}(b)=\bigcap_{1\leq i\leq l} O_{k-2}(a_i)$ and $\forall j\in\int{1,k-3}, O_{j}(b)=O_j(x)$.

The case where $k=4$ needs more attention as it may happen that $|O_{k-3}|\leq 1$ and hypothesis $H_3(k-1)$ does not apply. First, let us show that $\bigcap_{1\leq i\leq l} O_{k-2}(a_i)\in \cur{O}$. Since, by definition of $V_4^*$, $|V_{k-2}(x)|\geq 2$ and since $V_{k-2}(x)=S'$, it follows that $|S'|\geq 2$. Then, the fact that, by definition of $V_2$, there exists no $y\in V_{k-2}(x)=V_2(x)$ such that $|O_{k-3}(x)|\leq 1$ (remember that here $k-3=1$) implies that $|\bigcap_{1\leq i\leq l} O_{k-2}(a_i)|\geq 2$.
Now consider set $B=S'\cup\bigcap_{y\in S'} V(y)$. We have $\bigcap_{y\in S'} V_1(y)=\bigcap_{y\in S'} K(O_1(y))$ from the definition of $V_2$. Since for any $y\in S'$, $O_1(y)\subseteq \bigcap_{1\leq i\leq l} O_2(a_i)$, it follows that $K(\bigcap_{1\leq i\leq l} O_2(a_i)) \subseteq K(O_1(y))$ and that $K(\bigcap_{1\leq i\leq l} O_2(a_i)) \subseteq \bigcap_{y\in S'} K(O_1(y))$. And since $\bigcap_{1\leq i\leq l} O_2(a_i)\in \cur{O}$, it implies that $|\bigcap_{y\in S'} K(O_1(y))|\geq 2$. that is $B\in V_3^*$.
Let us now show that $B$ is maximal in $V_3^*$. Consider a vertex $z\in V_2\setminus S'$ to be added to $B$. Since $z\not\in S'$, $O_1(x)\not\subseteq O_1(z)$or $O_1(z)\not\subseteq \bigcap_{1\leq i\leq l} O_2(a_i)$.
If $O_1(x)\not\subseteq O_1(z)$, clearly, addding $z$ to $B$ makes $B\cap V_0$ decrease. And if $O_1(z)\not\subseteq \bigcap_{1\leq i\leq l} O_2(a_i)$, then $K(\bigcap_{1\leq i\leq l} O_2(a_i)) \not\subseteq K(O_1(z))$. Since we showed above that $K(\bigcap_{1\leq i\leq l} O_2(a_i))\subseteq \bigcap_{y\in S'} V_1(y)$, adding $z$ to $B$ would decrease $\bigcap_{y\in S'} V_1(y)$. Thus, $B$ is maximal in $V_3^*$ and the corresponding vertex $b\in V_3$ is such that $O_{2}(b)=\bigcap_{1\leq i\leq l} O_{2}(a_i)$ and $O_{1}(b)=O_1(x)$, which is the desired sequence.

% \marginpar{il faudrait expliquer bien pourquoi la sequence de b c'est ca. En particulier $O_1$}

So for any $k\geq 4$, there exists $b\in V_{k-1}$ such that $O_{k-2}(b)=\bigcap_{1\leq i\leq l} O_{k-2}(a_i)$ and $\forall j\in\int{1,k-3}, O_{j}(b)=O_j(x)$.
From recursion hypothesis $H_4(k-1)$, $V_{k-2}(b)=S_{k-2}(b)=S'=V_{k-2}(x)$. It follows that $O_{k-2}(x)=O_{k-2}(b)=\bigcap_{1\leq i\leq l} O_{k-2}(a_i)$. As a consequence, $S_{k-2}(x)=S'$ and $S_{k-1}(x)=S''$, and $H_4(k)$ is true.

\smallskip
\noindent\emph{Proof of $H_1(k)$.}
Since for any $i\in\int{1,l}$, $(O_1(a_i),\ldots , O_{k-3}(a_i))=(O_1(x),\ldots , O_{k-3}(x))$, we have $O_1(x)\subsetneq \ldots \subsetneq O_{k-3}(x)$.
Since $V_{k-2}(x)=S_{k-2}(x)$ and $|V_{k-2}(x)|>1$, necessarily $O_{k-3}(x)\subsetneq O_{k-2}(x)$.
We also established that $V_{k-1}(x)=S_{k-1}(x)$. Since $|V_{k-1}(x)|>1$, it follows that $O_{k-2}(x)\subsetneq O_{k-1}(x)$. Finally, since we showed above that $\bigcap_{1\leq i\leq l} O_{k-2}(a_i)\in \cur{O}$ and since $O_{k-2}(x)=\bigcap_{1\leq i\leq l} O_{k-2}(a_i)$, we have $O_{k-2}(x)\in \cur{O}$. And from recursion hypothesis, $O_{i}(x)\in\cur{O}$ for every $i\in\int{2,k-2}$.
Thus, $H_1(k)$ is true.

\smallskip
\noindent\emph{Proof of $H_2(k)$.}
Let $x,x'\in V_k$ such that $S(x)=S(x')$. From $H_4(k)$, $V_{k-1}(x)=V_{k-1}(x')$ and as a consequence $x=x'$. Therefore $H_2(k)$ is true.

\smallskip
\noindent\emph{Proof of $H_3(k)$.}
Let $(O_1,\ldots ,O_{k-1})\in \cur{O}^{k-1}$ such that $O_1\subsetneq \ldots \subsetneq O_{k-1}$.
From recursion hypothesis $H_3$, for any $P\in\cur{O}$ such that $O_{k-3}\subsetneq P$, there exists $x_p\in V_{k-2}$ such that $S(x_P)=(O_1,\ldots , O_{k-3}, P)$. We denote by $Y_{k-1}$ the set $\set{y\in V_{k-1}\ |\ S(y)=(O_1,\ldots , O_{k-3}, P) \text{ with } O_{k-2}\subseteq P\subseteq O_{k-1}}$.
Let $X=Y_{k-1}\cup\bigcap_{y\in Y_{k-1}} V(y)$. Since $O_{k-2}\subsetneq O_{k-1}$, $|X\cap V_{k-1}|\geq 2$.
From recursion hypothesis $H_4(k-1)$, for any $y\in Y_{k-1}$, $V_{k-2}(y)=\set{t\in V_{k-2}\ |\ S(t)=(O_1,\ldots,O_{k-4},Q)$ and $O_{k-3}\subseteq Q\subseteq P}$. And since $O_{k-3}\subsetneq O_{k-2}\subseteq P$, $\set{t\in V_{k-2}\ |\ S(t)=(0_1,\ldots,O_{k-2},Q) \text{ avec } O_{k-3}\subseteq Q\subseteq O_{k-2}}\subseteq \bigcap_{y\in Y_{k-1}} V_{k-2}(y)$. As $O_{k-3}\subsetneq O_{k-2}$, it follows that $|\bigcap_{y\in Y_{k-1}} V_{k-2}(y)|\geq 2$.
Then $X\in V^*_k$.
We will show that $X$ is maximal in $V^*_k$.
As we saw in the paragraph preceding Definition~\ref{DefCharSeq}, for any $y\in V_{k-1}$, $V_1(y)=K(O_{k-2}(y))$.
It follows that $\bigcap_{y\in X\cap V_{k-1}} V_1(y)=\bigcap_{O_{k-2}\subseteq P\subseteq O_{k-1}} K(P)$. From Prop.~\ref{PteOK}, $\bigcap_{O_{k-2}\subseteq P\subseteq O_{k-1}} K(P)=K(\bigcup_{O_{k-2}\subseteq P\subseteq O_{k-1}} P) =K(O_{k-1})$, and so $\bigcap_{y\in X\cap V_{k-1}} V_1(y)=K(O_{k-1})$.
Let $z\in V_{k-1}$ such that $S(z)=(O_1,\ldots , O_{k-3}, P')$ with $O_{k-2}\not\subseteq P'$ or $P'\not\subseteq O_{k-1}$. If $P'\not\subseteq O_{k-1}$, $K(O_{k-1})=\bigcap_{y\in X\cap V_{k-1}} V_1(y)\not\subseteq K(P')=V_1(z)$. Then, adding $z$ to $X\cap V_{k-1}$ would decrease $\bigcap_{y\in X\cap V_{k-1}} V_1(y)$.
As done above, using recursion hypothesis $H_4(k-1)$, we obtain that $\bigcap_{y\in X\cap V_{k-1}} V_{k-2}(y)=\set{t\in V_{k-2}\ |\ \bigcup_{y\in X\cap V_{k-1}} O_{k-3}(y) \subseteq O_{k-3}(t) \subseteq \bigcap_{y\in X\cap V_{k-1}} O_{k-2}(y) \text{ and } (O_1(t),\ldots ,O_{k-4}(t))=(O_1,\ldots ,O_{k-4})}$.\\
Since $\bigcap_{y\in X\cap V_{k-1}} O_{k-2}(y)=O_{k-2}$, it follows that if $O_{k-2}\not\subseteq P'$ then adding $z$ to $X\cap V_{k-1}$ would decrease $\bigcap_{y\in X\cap V_{k-1}} V_{k-2}(y)$.
Consequently, $X$ is maximal and, from $H_4(k)$, vertex $x\in V_k$ corresponding to $X$ has the desired characterising sequence $(O_1,\ldots ,O_{k-1})$.
This shows that $H_3(k)$ is true, which ends the proof.
\end{proof}

} %%% COMMENTAIRE

%%%%%%%%%%%%%%%%%%%%%%%%%%%%%%

Theorem~\ref{CFSstop} is a corollary of Theorem~\ref{ThCharSeq}. Indeed, Theorem~\ref{ThCharSeq} states that the characterising sequence $(O_1(x),\ldots ,O_{k-1}(x))$ of any node $x$ at level $k$ is such that $O_1(x)\subsetneq \ldots \subsetneq O_{k-1}(x)$. The strict inclusions imply that the length of the characterising  sequence, which is equal to $k-1$, cannot exceed the height $h$ of the inclusion order of elements of $\cur{O}$. Since $h\leq n-1$, necessarily $V_{n+1}$ is empty. It follows that the clean factor series is finite and stops at rank at most $n$.

\paragraph{Size of the multipartite model}

The size of the multipartite graph $M$ obtained at termination of the clean factor series can be exponential in theory, as the number of maximal cliques itself may be exponential. But in practice, its size is quite reasonable and it can be computed efficiently.
%Indeed, the size of $M$ depends on the complexity of imbrication of maximal cliques.
Theorem~\ref{th:size} below shows that under reasonable hypotheses, the size of $M$ only linearly depends on the number of vertices of $G$, with a multiplicative constant reflecting the complexity of imbrication of maximal cliques.

\begin{theorem}\label{th:size}
If every vertex of $G$ is involved in at most $k$ maximal cliques and if every maximal clique of $G$ contains at most $c$ vertices, then $|V(M)|\leq 4\times min(k\,2^c\,c!\, , 2^k\,k!)\times n$.
\end{theorem}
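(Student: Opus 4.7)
The plan is to combine the injective characterisation from Theorem~\ref{ThCharSeq} with an elementary count of chains in a Boolean lattice. By that theorem, the map $x\mapsto S(x)=(O_1(x),\ldots,O_{k-1}(x))$ injects $V_k$ (for $k\geq 2$) into the set of strict chains of length $k-1$ in the inf-semilattice $\cur{O}'$. Letting $\cur{T}$ denote the total number of non-empty strict chains in $\cur{O}'$, we thus have $|V(M)|\leq |V_0|+|V_1|+\cur{T}=n+|\cur{K}(G)|+\cur{T}$, where $|\cur{K}(G)|\leq nk$ by incidence counting (the inequalities $\sum_{C\in\cur{K}(G)}|C|\leq nk$ and $|C|\geq 1$). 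The whole problem thus reduces to bounding $\cur{T}$ in two different ways.

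The elementary input I would use is the following Boolean chain bound: the number of non-empty strict chains in $2^{[r]}$ is at most $r!\cdot 2^{r+1}$, because each such chain is a sub-chain of at least one saturated chain $\varnothing\subsetneq\{x_{\sigma(1)}\}\subsetneq\cdots\subsetneq[r]$ (there are $r!$ such, one per permutation $\sigma$), and each saturated chain admits at most $2^{r+1}$ sub-chains. The first estimate then uses the maximum clique size: every chain $O_1\subsetneq\cdots\subsetneq O_m$ in $\cur{O}'$ lies in $2^C$ for any maximal clique $C$ containing $O_m$, since $O_m$ is an intersection of maximal cliques and every $O_i\subseteq O_m$. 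Summing the Boolean bound over $\cur{K}(G)$ gives $\cur{T}\leq|\cur{K}(G)|\cdot c!\cdot 2^{c+1}\leq 2nk\,2^c c!$, whence $|V(M)|\leq 4nk\,2^c c!$.

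The second estimate uses the bound on clique degrees via the order-reversing map $O\mapsto K(O)$ from $\cur{O}'$ into $2^{\cur{K}(G)}$: strictness follows from $\bigcap_{C\in K(O)}C=O$, and for any $v\in O_1$ one has $v\in O_i$ for every $i$, so the image chain $K(O_m)\subsetneq\cdots\subsetneq K(O_1)$ is entirely contained in $2^{\cur{K}(v)}$, a Boolean lattice of rank at most $k$. Counting pairs (chain, $v\in O_1$) and invoking the Boolean bound with $r=k$ yields $\cur{T}\leq n\cdot k!\cdot 2^{k+1}=2n\,2^k k!$, hence $|V(M)|\leq 4n\,2^k k!$. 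Taking the minimum of the two estimates gives the claim. The main obstacle is the Boolean chain count itself, plus minor bookkeeping around the atypical chains whose minimum element is empty, which is absorbed in the constant $4$ since the characterising sequences produced by Theorem~\ref{ThCharSeq} always satisfy $|O_1|\geq 2$.
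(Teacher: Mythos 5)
Your proof is correct and follows essentially the same route as the paper's: both reduce the bound, via Theorem~\ref{ThCharSeq}, to counting strict chains of $\cur{O}'$ as sub-chains of saturated chains, once anchored in a maximal clique containing the top element (giving the $k\,2^c\,c!\,n$ bound) and once anchored below the bottom element (giving the $2^k\,k!\,n$ bound), with the atypical end terms absorbed into the factor $4$. The only cosmetic difference is that the paper anchors its second count at the pairwise-disjoint minimal elements of $\cur{O}'\setminus\set{\varnothing}$ rather than at individual vertices $v\in O_1$ via the order-reversing map $O\mapsto K(O)$.
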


This upper bound can be obtained by bounding the number of sequences $O_1,\ldots ,O_i$ in two different ways: either by consedering sequences ending with a fixed set $O_i=A$, which are obtained by starting from set $A$ and removing vertices one by one; or by considering sequences starting with a fixed set $O_1=B$, which are obtained by starting from a maximal clique containing $B$ and intersecting it by one more maximal clique containing $B$ at each step.

%%% COMMENTAIRE
\commentaire{

\begin{proof}
Thanks to Theorem~\ref{ThCharSeq}, we obtain an upper bound on $|V(M)|$ by counting the number of strictly increasing sequences of the form $(O_1,\ldots,O_i)$ such that $O_2,\ldots,O_{i-1}\in \cur{O}'\cup\set{V(G)}$.

First, we count \emph{normal sequences}, which are those such that $O_1\neq\varnothing$ and $O_i\neq\varnothing$. Normal sequences are all sub-sequences of those obtained starting from a clique $O_i$ and recursively removing one vertex at each time until obtaining a singleton $O_1$. The number of such sequences starting with a fixed clique is at most the number of orders on the $c$ vertices of the clique, that is $c!$. And the number of sub-sequences of a sequence of length $c$ is $2^c$. Finally, since each vertex is included in at most $k$ maximal cliques, the number of maximal cliques is at most $k\,n$. Then, there are at most $k\,2^c\,c!\,n$ normal sequences.

Another way to count normal sequences, is to count those starting with a fixed minimal $O_1\in\cur{O}'\setminus\set{\varnothing}$. Since $\cur{O}'$ is closed under intersection, minimal elements of $\cur{O}'\setminus\set{\varnothing}$ are pairwise disjoint and therefore their number is at most $n$. The normal sequences having $O_1$ as first set can be formed by starting from a clique containing $O_1$ and recursively intersecting it with another clique containing $O_1$. By hypothesis, there are at most $k$ cliques containing a given $O_1$, and therefore, the number of normal sequences having $O_1$ as first set is, as explained above, at most $k!\,2^k$. Thus, there are at most $k!\,2^k\,n$ normal sequences.

Finally, since sequences considered here may have one extra set at their beginning and their end respectively, their number is at most $4$ times the number of normal sequences.
\end{proof}
} %%% COMMENTAIRE

In practice, parameters $k$ and $c$ are quite small, as they are often constrained by the context itself independently from the size of the graph. Then, the size of $M$ is small. An important consequence is that, using algorithms enumerating the cliques or bi-cliques of a graph (see~\cite{Gely20091447} for a recent survey), $M$ can be computed efficiently, that is in low polynomial time, since the number of maximal cliques is small.

%Moreover, since complex networks have special properties such as a very low global density and a high local density, their number of intersections of maximal cliques is small. As this number is nothing but the cardinality of ${\mathcal O}'$ and as each element $x$ at level $k$ of $M$ is characterized by a sequence $O_1(x) \neq \subset O_2(x) \ldots \neq\subset O_{k-1}(x)$ with $O_i(x) \in {\mathcal O'}$, we can observe that the number of elements of $M$ is relatively small.

%Furthermore, from the fact that each element $x$ of $M$ corresponds to a sequence of intersections of maximal cliques of $G$, we can establish a method to calculate quantities of graph $G$ from elements of $M$. One of such results is an explicit formula giving the number of triangles in $G$, which is a very important parameter of complex networks. These points are the subject of our future works and will be proved in a next paper.

%%%%%%%%%%%%%%%%%%%%%%%%%%%%%%%%%%%%%%%%%%%%%%%%%%%%%%%%%%%%%%%%
%%%%%%%%%%%%%%%%%%%%%%%%%%%%%%%%%%%%%%%%%%%%%%%%%%%%%%%%%%%%%%%%
\section{Perspectives}\label{sec-conclu}
%%%%%%%%%%%%%%%%%%%%%%%%%%%%%%%%%%%%%%%%%%%%%%%%%%%%%%%%%%%%%%%%
%%%%%%%%%%%%%%%%%%%%%%%%%%%%%%%%%%%%%%%%%%%%%%%%%%%%%%%%%%%%%%%%

%In this paper, we explored the possibility to extend a bipartite decomposition of graphs proposed previously to model real-world complex networks. Though the recursive factorising processes we used to do so may seem very natural, the problem of their termination is non trivial. We showed that the most immediate definition do not always terminate and we introduced two variations: one of which we could prove termination for (clean factor), the question remaining open for the other (factor). Moreover, we showed that the clean factor decomposition has very strong relationship with the inf-semilattice of the intersections of maximal cliques of $G$, which constitute a striking property of these combinatorial objects.

%It consists in encoding high-level properties (like cliques, their overlaps, overlaps of overlaps, and so on) into multipartite graphs obtained through an iterative decomposition process. A key feature of these processes is their ability to produce {\em finite} encodings. We show that the most immediate approach does not have this feature; we introduce a more restricted version for which we are able to prove termination. In between, we consider another interesting decomposition for which the question remains open.

Many questions arise from our work. The first one is to find minimal restrictions of the factorising process that guarantee termination. On the other hand, for processes that do not always terminate, one may determine on which classes of graphs those processes terminate. Another question of interest is the termination speed, as well as the size of the obtained encoding: proving upper bounds with softer hypothesis would be desirable.

%If the encoding is compact enough, it may be interesting to use it to represent the graph in central memory and solve algorithmic problems directly on the multipartite decomposition, without projecting it. Conversely, as the decompositions we study here are NP-hard, one may introduce other decompositions easier to compute, while still encoding the original graph.

Finally, the use of multipartite decompositions as models of complex networks, in the spirit of the bipartite decomposition, asks for several questions. In this context, the key issue is to generate a random multipartite graph while preserving the properties of the original graph. To do so, one has to express the properties to preserve as functions of basic multipartite properties (like degrees, for instance) and to generate random multipartite graphs with these properties. This is a promising direction for complex network modelling, but much remains to be done.

\medskip
\noindent
{\bf Acknowledgements.}
We warmly thank
Jean-Loup Guillaume,
Stefanie Kosuch
and
Cl\'emence Magnien
for helpful discussions.

%\tableofcontents

\small{
\bibliographystyle{plain}
\bibliography{xbib,perso}
}
\end{document}